\newcommand{\RR}{\mathbf{R}}
\renewcommand{\citet}{\cite}
\renewcommand{\phi}{\varphi}
\renewcommand{\a}{\bm{a}}
\renewcommand{\b}{\bm{b}}
\renewcommand{\c}{\bm{c}}
\renewcommand{\d}{\bm{d}}
\renewcommand{\u}{\bm{u}}
\renewcommand{\v}{\bm{v}}
\newcommand{\w}{\bm{w}}
\newcommand{\x}{\bm{x}}
\newcommand{\y}{\bm{y}}
\newcommand{\z}{\bm{z}}
\newcommand{\0}{\bm{0}}
\newcommand{\1}{\bm{1}}
\newcommand{\A}{\mathbf{A}}
\newcommand{\B}{\mathbf{B}}
\newcommand{\R}{\mathbf{R}}
\newcommand{\T}{\mathbf{T}}
\newcommand{\setJ}{\mathcal{J}}
\newcommand{\setP}{\mathcal{P}}
\newcommand{\setR}{\mathcal{R}}
\newcommand{\argmin}{\operatornamewithlimits{argmin}}
\newtheorem{claim}{Proposition}
\theoremstyle{definition}
\newtheorem{definition}{Definition}
\renewcommand{\T}{{\mbox{\sc t}}}
\newtheorem{remark}{Remark}
\newcommand{\Add}[1]{{\color{black}{#1}}}
\newcommand{\Delete}[1]{}
\title{Multi-Path Alpha-Fair Resource Allocation at Scale in Distributed Software Defined Networks$^{\star}$\footnote{$^{\star}$This is the authors' edited copy of the article ``{Multi-Path Alpha-Fair Resource Allocation at Scale in Distributed Software Defined Networks}" to appear in \emph{IEEE JSAC Special Issue on Scalability Issues and Solutions for Software Defined Networks} (accepted August 2018).}}
\author{  \IEEEauthorblockN{Zaid Allybokus\IEEEauthorrefmark{1}\IEEEauthorrefmark{2}, Konstantin Avrachenkov\IEEEauthorrefmark{1}, Jérémie Leguay\IEEEauthorrefmark{2}, Lorenzo Maggi\IEEEauthorrefmark{2}}
    \\\IEEEauthorblockA{\IEEEauthorrefmark{2}Huawei Technologies, France Research Center
    \\\{zaid.allybokus, jeremie.leguay, lorenzo.maggi\}@huawei.com}
    \\\IEEEauthorblockA{\IEEEauthorrefmark{1}INRIA Sophia Antipolis
    \\ konstantin.avratchenkov@inria.fr}
 }        
\begin{document}
\maketitle

\begin{abstract}
The performance of computer networks relies on how bandwidth is shared among different flows. Fair resource allocation is a challenging problem particularly when the flows evolve over time. 
To address this issue, bandwidth sharing techniques that quickly react to the traffic fluctuations are of interest, especially in large scale settings with hundreds of nodes and thousands of flows. In this context, we propose a distributed algorithm based on the Alternating Direction Method of Multipliers (ADMM) that tackles the multi-path fair resource allocation problem in a distributed SDN control architecture. Our ADMM-based algorithm continuously generates a sequence of resource allocation solutions converging to the fair allocation while always remaining feasible, a property that standard primal-dual decomposition methods often lack. Thanks to the distribution of all computer intensive operations, we demonstrate that we can handle large instances at scale.
\end{abstract}
\begin{IEEEkeywords}
Software-Defined Networks; Multi-path Resource Allocation; Alpha-Fairness; Alternating Direction Method of Multipliers; Distributed SDN Control Plane; Distributed Algorithms.
\end{IEEEkeywords}

\section{Introduction}

Software Defined Networking (SDN) technologies are radically transforming network architectures by offloading the control plane (e.g., routing, resource allocation) to powerful remote platforms that gather and keep a local or global view of the network status in real-time and push consistent configuration updates to the network equipment. The computation power of SDN controllers fosters the development of a new generation of control plane architecture that uses compute-intensive operations. 
Initial design of SDN architectures \cite{vaughan2011openflow} had envisioned the use of one central controller. However, for obvious scalability and resiliency reasons, the industry has quickly realized that the SDN control plane needs to be partially distributed in large network scenarios \cite{kreutz2015software}. 
Hence, although \emph{logically} centralized, in practice the control plane may consist of multiple controllers each in charge of a SDN domain of the network and operating together, in a \emph{flat} \cite{stallings2013openflow} or \emph{hierarchical} \cite{hassas2012kandoo} architecture. In Fig.~\ref{fig:architecture}, an example of flat architecture is illustrated. In hierarchical architectures, a master controller is placed on top of domain sub-controllers and keeps global view of the network and the sub-controllers to handle the message-passing protocols. 

\begin{wrapfigure}{i}{0.5\textwidth}
  \begin{center}
    \includegraphics[width=0.48\textwidth]{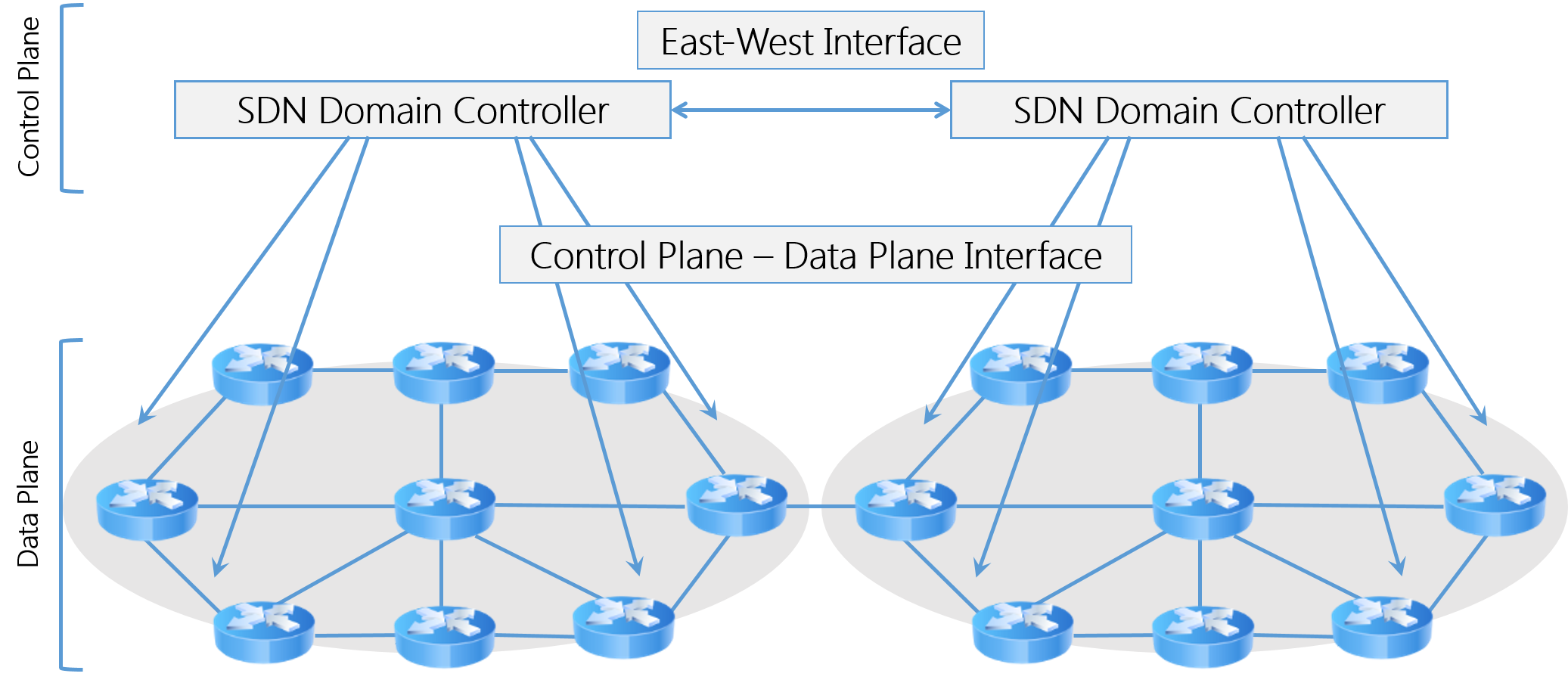}
  \end{center}
  \caption{Distributed SDN architecture.}
  \label{fig:architecture}
\end{wrapfigure}
In this paper, we study the problem of computing a globally \emph{fair}  (in the sense of $\alpha$-fairness defined by Mo and Walrand in \cite{mo2000fair}, see Section~\ref{sec:problem}) multi-path resource allocation in a distributed SDN scenario, where the control plane is distributed over several domain controllers. In this context, flows transiting in the network typically correspond to traffic aggregates of a customer or a class of customers. \Add{The traffic aggregates can be carried on one path or be split through many different paths connecting a source to a destination. Indeed, the multi-path traffic engineering can be preferred over single-paths to ensure routing robustness, low latency, or good load balancing for better performance, as explained in \cite{kumar2018semi}}.

In this paper, we study the problem of computing a globally \emph{fair} (in the sense of $\alpha$-fairness defined by Mo and Walrand in \cite{mo2000fair}, see Section~\ref{sec:problem}) multi-path resource allocation in a distributed SDN scenario, where the control plane is distributed over several domain controllers. In this context, flows transiting in the network typically correspond to traffic aggregates of a customer or a class of customers.
We consider the traffic engineering use case where the size of flows evolves over time and the bandwidth reserved to each of them has to be quickly adjusted towards the novel fair solution.   
%In the SDN distributed scenario, it is often preferable to have a quick access to a good quality solution rather than a provably asymptotically optimal solution with poor convergence rate.
%In distributed SDN architectures, we can identify mainly two requirements for a distributed algorithms computing resource allocation. 

%\begin{figure}
%\centering \includegraphics[width = .5\textwidth]{./Figures/Figure1}
%\caption{Distributed SDN architecture.}
%\label{fig:architecture}
%\label{fig:archi}
%\end{figure}

In distributed SDN architectures \Add{\cite{chang2016asynchronous}\cite{phemius2014disco}}, each controller has full information about its own domain. \Add{Although for fault-tolerance reasons, they could be composed of master and slave agents that act as a single entity \cite{obadia2014failover}, in this paper we only consider the distribution of the SDN control plane for scalability purposes: the control plane consists of multiple controllers each in charge of a single domain and a reduced portion of the global workload and operate together as a logically centralized controller. Therefore}, each controller can communicate with adjacent peer controllers and/or with a central, upper-layer controller entity. 
However, exchanges between controllers are expensive in terms of communication delay and overhead~\cite{phemius2014disco}. This technological limitation translates directly into an algorithmic constraint: distributed algorithms for SDN have a limited budget in terms of the number of iterations to reach convergence, \Add{i.e. a near-optimal solution}.

In distributed SDN architectures, as depicted in Fig.~\ref{fig:architecture}, each controller has full information about its own domain. Moreover, it can communicate with adjacent peer controllers and/or with a central, upper-layer controller entity. 
However, exchanges between controllers are expensive in terms of communication delay and overhead~\cite{phemius2014disco}. This technological limitation translates directly into an algorithmic constraint: distributed algorithms for SDN have a limited budget in terms number of iterations to reach convergence.

A second crucial property for any distributed algorithm for SDN is responsiveness. In fact, the network state may be affected by abrupt changes, e.g., flow size variation, flow arrival/departure, link/node congestion. In this case, convergence for the previous network state may not even be attained when a change occurs in the system. For this reason, it is often preferable to have a quick access to a good quality solution rather than a provably asymptotically optimal solution with poor convergence rate. Hence, it is crucial that the resource allocation computed by a distributed algorithm is \emph{feasible, thus implementable}, at any iteration.

Also, modern SDN controllers \cite{bilal2016overview} rely on grid computing technologies \Add{such as Akka \cite{akka} or Hazelcast~\cite{shin2017iris}, respectively for the two major open source SDN controllers OpenDayLight \cite{opendaylight} and ONOS \cite{onosproject}. As a consequence, %\Remove{any distributed algorithm for SDN should be massively parallelized} 
massively parallelizable algorithms for SDN should be preferable as more adapted and better likely to tackle scalability issues}. 

Also, modern SDN controllers rely on grid computing technologies such as Akka or Hazelcast~\cite{shin2017iris} for compute intensive tasks. As a consequence, any distributed algorithm for SDN should be massively parallelized. 

To recap, we identify two main requirements for a distributed algorithm for fair resource allocation, namely  \textit{i)}~converging to a ``good'' fair solution in a small number of iterations and \textit{ii)} producing \emph{feasible} solutions at all iterations.

To recap, we identify three main requirements for a distributed algorithm for fair resource allocation, namely  \textit{i)}~converging to a ``good'' fair solution in a small number of iterations, \textit{ii)} producing \emph{feasible} solutions at all iterations and \textit{iii)} being massively parallelized.

We claim that none of the current methods that allocate resources in an SDN scenario is able to achieve the three aforementioned goals at the same time. 
Local mechanisms such as Auto-Bandwidth \cite{dhody2009autobandwidth} have been proposed to greedily and distributedly adjust the allocated bandwidth to support time-varying IP traffic in \emph{Multi Protocol Label Switching} (MPLS) networks. Auto-Bandwidth successfully tackles goals \textit{ii)} and \textit{iii)}, but not \textit{i)}, as it neither ensures fairness nor optimizes resources globally. 
Also, classic primal-dual algorithms have been proposed to solve the $\alpha$-fair resource allocation problem in distributed SDN scenarios, as in \cite{mccormick2014real}. However, primal-dual algorithms are known to fail at providing feasible solutions at any iteration step, thus they fail at achieving goal \textit{ii)}.

%We propose a distributed algorithm that performs in real-time for the fair resource allocation problem in distributed SDN control planes. It is based on the 
Recently in the optimization research community, the \emph{Alternating Direction Method of Multipliers} (ADMM) \cite{boyd2011distributed} has captured the attention for its separability and fast convergence properties. 
We claim that ADMM offers new and yet unexploited possibilities to tackle concurrently the goals \textit{i)}, \textit{ii)} and \textit{iii)}. Indeed, in this paper we show how ADMM serves our purposes, by allowing all controllers to handle their own domains simultaneously, while still converging to a global optimum in the fashion of a general distributed consensus problem.

\noindent \textbf{Main contributions}: We develop Fast Distributed ADMM (FD-ADMM, Algorithm \ref{FDADMM}) for the multi-path $\alpha$-fair resource allocation problem over a distributed SDN control plane. It iteratively produces resource allocations that converge to the $\alpha$-fair optimal allocation. Heavy computations, requiring projections on polytopes, can be massively parallelized on a link-by-link basis (Algorithm \ref{FDADMM}, line 7) in each domain. This yields a convergence rate (in terms of iteration count) that does not depend on the partitioning of the network into domains, that can therefore be done \emph{independently}.% to satisfy other requirements (e.g., to minimize the communication overhead).

We showed \cite{allybokus2017real} that our FD-ADMM algorithm can function in real-time, as \textit{i)} close-to-optimal solutions are available since the \emph{very first} iterations and \textit{ii)} feasible allocations are available at \emph{all} iterations (Proposition \ref{claim:feasible}), a property that standard primal-dual decomposition methods generally lack. This permits to adjust within very short time the bandwidth of flows that evolve quickly and need immediate response.\\
%Moreover, we show how to achieve a near-optimal convergence rate by providing an explicit and adaptive tuning for the FD-ADMM penalty parameter (Scheme \ref{alg:penalty_adaptation}).
Moreover, we addressed in \cite{allybokus2017lower} the problem of the penalty parameter initialization in ADMM that is well-known to highly condition the convergence speed of the algorithm, by proposing a tuning based on a lower bound for the $\alpha$-fair resource allocation problem that we derived.

This article is a synthesis of the two mentioned works \cite{allybokus2017lower} and \cite{allybokus2017real} with three novel contributions:

\begin{itemize}
\item We extend the model presented in \cite{allybokus2017real} to the setting of \emph{multi-path} routing. Now, each flow can carry their traffic along several paths instead of only one. %We show how FD-ADMM is designed with respect to this setting.
\item We also show that the algorithm can integrate a switching cost in order to address the related relevant problem of limiting flow reconfigurations. Although we expect FD-ADMM to continuously provide feasible iterates that respond to traffic variations in real-time, it is practically infeasible to reconfigure all the flows too often without harming the network stability and the overhead \cite{paris2016controlling}. Therefore, we use ideas of sparse optimization \cite{bach2012optimization} to make FD-ADMM sensitive to the cost of a reconfiguration of the bandwidth along a path, and hence operate a trade-off between fairness and switching cost.
\item Finally, we evaluate numerically the performance of FD-ADMM over large scale instances made of \Add{Barabasi-Albert and Fat tree} networks of up to hundreds of nodes and thousands of links, requests and paths, in a setting where several SDN domain controllers operate in parallel.  
\item \Add{To the best of our knowledge, we were the first to show how ADMM can help designing real-time distributed algorithms for computing $\alpha$-fair resource allocations in distributed settings. We were also the first to address the penalty parameter tuning of ADMM in this situation. Those results are all extended to the multi-path setting in the present article. }
\end{itemize}

%In this article, we review the two latter contributions and extend them to the context of the \emph{multi-path} setting. Now, the flows can split their traffic among several pre-established paths instead of only one. We also

%Finally, we show by experimentation the responsiveness of the algorithm by running it in real-time, where traffic varies and where a deadline has to be respected for each reconfiguration step. 
%Finally, we benchmark FD-ADMM with the State-of-the-Art (SoA) approach for computing $\alpha$-fair resource allocations in distributed scenarios in \cite{mccormick2014real}, which is based on the Lagrangian dual splitting method. We show that our algorithm outperforms the SoA in terms of convergence rate, feasibility preservation and hence, overall, real-time responsiveness.\\

The remainder of this paper is organized as follows. Section~\ref{sec:related} surveys the related work around the fair resource allocation problem. Section~\ref{sec:problem} formulates the multi-path $\alpha$-fair resource allocation problem and recalls the key ideas of ADMM. Section~\ref{sec:consensus} introduces FD-ADMM, our distributed ADMM-based algorithm that benefits from the distribution of SDN controllers over multiple domains, that relies on a reformulation of the problem in the fashion of a general distributed consensus problem. Section~\ref{sec:reconfiguration} extends the model to account for the introduction of the switching cost. Section~\ref{sec:execution} provides large scale simulations that validate our approach and finally, Section~\ref{sec:conclusion} concludes the paper.

\section{Related work}
\label{sec:related}

The concept of fair resource allocation has been a central topic in networking. Particularly, \emph{max-min} fairness has been the classic resource sharing principle \cite{bertsekas1992data} and has been studied extensively. The concept of \emph{proportional fairness} and its weighted variants were introduced in \cite{kelly1998rate}. Later, a spectrum of fairness metrics including the two former ones was introduced in \cite{mo2000fair} as the family of $\alpha$-fair utility functions.

Some early notable works on max-min fairness include \cite{charny1995congestion}, where the authors propose an asynchronous distributed algorithm that communicates explicitly with the sources and pays some overhead in exchange for more robustness and faster convergence. Later in \cite{skivee2004distributed}, a distributed algorithm is defined for the weighted variant of max-min fair resource allocation problem in MPLS networks, based on the well-known property that an allocation is max-min fair if and only if each \emph{Label-Switched Path} (LSP) either admits a \emph{bottleneck link} amongst its used links or meets its maximal bandwidth requirement (see Definition 4 there of a bottleneck link). The problem of Network Utility Maximization (NUM) was also addressed with standard decomposition methods that could give efficient and very simple algorithms based on gradient ascent schemes performing their update rules in parallel. In this context, Voice \cite{voice2006stability}, then McCormick et al. \cite{mccormick2014real}, tackle the $\alpha$-fair resource allocation problem with a gradient descent applied to the dual of the problem. 

In the works described above, no mention is made on the potential (in fact, systematic) feasibility violation of the sequences generated by those algorithms, which is a crucial matter in  distributed SDN settings. Regarding this topic, the authors of \cite{lee1998decentralized} employ damping techniques to avoid transient infeasibility while reaching the max-min fair point, but cannot guarantee feasibility at all times, especially in dynamic settings. 
Also motivated by this, more recently the authors of \cite{sundaresan2016iterative} provide a feasibility preserving version of Kelly's methodology in \cite{kelly1998rate}. Their algorithm introduces a slave that gives at each (master) iteration an optimal solution of a weighted proportionally fair resource allocation problem that is explicitly addressed in only the two cases of polymatroidal and flow aggregating networks. In fact, our paper contributes to this problem by proposing an efficient \emph{real-time} version of the slave process, for any topology, preserving feasibility at each (slave) iteration. Amongst approximative approaches, one can quote the very recent work \cite{marasevic2016fast} where a multiplicative approximation for $\alpha \neq 1$ and additive approximation for $\alpha = 1$ is provably obtained in poly-logarithmic time in the problem parameters. Moreover, starting from any point, the algorithm reaches feasibility within poly-logarithmic time and remains feasible forever after. The algorithm described in our paper solves the problem optimally and reaches feasibility as from the first iteration from any starting point.

The work around ADMM is currently flourishing. The $\mathcal{O}(\frac{1}{n})$ best known convergence rate of ADMM \cite{he2012douglas} failed to explain its empirical fast convergence until very recently in \cite{Deng2016}, where global linear convergence rates are established in four scenarios of the strongly convex case. ADMM is also well-known for its performance that highly depends on the parameter tuning, namely, the penalty parameter $\lambda$ in the augmented Lagrangian formulation (see Section \ref{augmentedlag} below). An effective use of this class of algorithms cannot be decoupled from an accurate parameter tuning, as convergence can be extremely slow otherwise. Thus, in the same paper  \cite{Deng2016}, the authors provide a linear convergence proof that yields a convergence rate in a closed form that can be optimized with respect to the problem parameters. Therefore, thanks to these works, we derived in \cite{allybokus2017real} an adaptive tuning of ADMM, and in \cite{allybokus2017lower} a satisfactory initialization of the penalty parameter for the $\alpha$-fair resource allocation problem. Several papers use the distributivity of ADMM to design efficient distributed algorithms solving consensus formulations for e.g. model predictive control \cite{mota2012distributed} and resource allocation in wireless virtual networks \cite{liang2015distributed} but do not address this fundamental detail.

%\noindent{\textbf{Our contribution:}} In this paper, we reformulate the $\alpha$-fair resource allocation problem and we design a distributed algorithm based on ADMM, called FD-ADMM. We show that this algorithm outputs at each iteration a feasible resource allocation strategy that converges to the unique optimum of the problem. We also provide an adaptive strategy to correctly tune the FD-ADMM penalty parameter and we show that projections can be  massively parallelized on a link-by-link basis.
%Finally, we show how our algorithm outperforms the dual methods mentioned above in terms of feasibility preservation and responsiveness in dynamic scenarios.

%\section{System considerations on Auto-Bandwidth; review and challenges}

%Section1
%\documentclass[10pt, compsoc]{IEEEtran}
%\input{preamble.tex}

%\begin{document}

\section{Multi-path fair resource allocation problem}
\label{sec:problem}
In this section, we define the multi-path $\alpha$-fair resource allocation problem and formulate it as a centralized convex optimization problem. We also introduce the basic principles of ADMM and then present the centralized algorithm it yields. We will see in what this centralized version does not fit our distributed setting, which motivates the more detailed decomposition of this article.  %Then, we start off with our algorithm design by presenting C-ADMM, an algorithm that solves our problem in a centralized fashion and that will be helpful to design our distributed algorithm.
%We then propose an efficient distributed algorithm based on ADMM to solve it. For clarity, we start off with basic definitions and properties.

\subsection{Presentation}
\label{subsec:problemformulation}
We define the network as the set of its links $\setJ$. Each network link $j\in \setJ$ has a total capacity of $c_j \in \R_+$.
%Let $\setR$ be a set of connection requests over a network with a set $\setJ$ of capacitated links. 
 
Let $\setR$ be a set of \emph{connection requests}, or shortly, \emph{requests}, over the network. We regard each request $r$ as a communication demand between a source node and a destination node of the network, provided with a set $P_r$ of several paths connecting the two end nodes. We assume the paths are pre-computed once and for all and do not change. In this article, only the bandwidth allocation along fixed paths is considered and as such, the path computation dimension of the problem goes beyond the scope of this study. We model the paths as subsets of $\setJ$, that we assume form a physical path on the network's real topology.  
Thus, in the multi-path setting, the link capacities will be shared by the different requests in $\setR$ with a bandwidth allocation along one or many of their paths. Let $\setP \coloneqq \cup_r P_r$ be the set of all established paths. For each path $p\in\setP$, we define $J_p$ as the set of links that form $p$. For example, if the path $p$ contains the set of links $j_1, j_2$ and $j_3$, then $J_p = \{j_1,j_2,j_3\}$.

The aim is to compute an optimal bandwidth allocation with respect to the  $\alpha$-fairness metric
%\begin{equation}
%\label{eq:fairness}
%\max_{\x\geq 0, \A\x\leq \c, \B\x = \y}
%f^\alpha(\y) % = \sum_{r \in \setR} f^\alpha_r(y_r),
%\end{equation}
%where the $\alpha$-fair utility function $f^\alpha$ 
defined in \cite{mo2000fair}, that we report below.

\begin{definition}[$(w,\alpha)$-fairness, \cite{mo2000fair}]
\label{defalphafair}
Let $\alpha\in \R_+$. Let $F\subset \R_+^n$ be a non-empty feasible set not reduced to $\{\0\}$.
Let $\w \in \R_+^n$ and $\y^*\in F$. We say that $\y^*$ is $(\w,\alpha)$-\emph{fair} (or simply $\alpha$-fair when there is no confusion on $\w$) if the following holds:
\[\forall r\in [1,n],\quad y^*_r >0\quad\mbox{and}\quad\forall \y\in F 
,\quad \sum_{r=1}^n w_r\frac{y_r - y^*_r}{y^{*\alpha}_r} \leq 0. \]
Equivalently, $\y^*$ is $(\w,\alpha)$-fair if, and only if $\y^*$ maximizes the $\alpha$-fair utility function $f^\alpha$ defined over $F-\{0\}$:
\[f^\alpha(\y) = \sum_{r = 1}^n f^\alpha_r(y_r),
\]
where
\[
f_r^\alpha(y_r) =
\left\{ \begin{array}{ll}
w_r \log(y_r) & \mbox{ if } \alpha=1,\\
w_r \frac{y_r^{1-\alpha}}{1-\alpha} & \mbox{ otherwise. }\\
\end{array}\right. 
\]
\end{definition}
The success of $\alpha$-fairness is due to its generality: in fact, for $\alpha=0,1,2,\infty$ it is equivalent to max-throughput, proportional fairness, min-delay, and max-min fairness, respectively \cite{srikant2013communication}. The introduction of the weight vector $\w$ permits to assign to a request a level of priority: the larger $w_r$, the higher the system's incentive to allocate the available bandwidth to request $r$. In general, $w_r$ can model the number of connections of the same type of a connection request $r$ or the amount of backlogged traffic for this request, for example.

\noindent {\bf Sharing policy.} We define the \emph{link-path} incidence matrix $\A \in \R^{|\setJ|\times |\setP|}$, and the \emph{request-path} incidence matrix $\B \in \R^{|\setR|\times |\setP|}$, as the following:
\begin{equation}
A_{jp} = \left\{\begin{array}{ll} 1 &\mbox{if } j\in J_p\\ 0& \mbox{otherwise}\end{array}\right., \mbox{ and } B_{rp} = \left\{\begin{array}{ll} 1 &\mbox{if } p\in P_r\\ 0& \mbox{otherwise.}\end{array}\right.
\end{equation}

The multi-path $\alpha$-fair bandwidth sharing policy can be defined as follows. For each request $r$, the network can attribute a bandwidth of $x_p \geq 0$ along one or more of its established paths $p\in P_r$. The \emph{path-wise allocation} $\x$ is then defined as the vector $(x_p)_{p\in \setP}$. The \emph{aggregate bandwidth} of request $r$, $y_r \geq 0$, represents the total bandwidth allocated to request $r$, that is, 
\begin{equation}y_r \coloneqq (\B\x)_r = \sum_{p\in P_r} x_p. \label{aggregate_eq}\end{equation}
The goal of the network is to find a path-wise allocation $\x$ in such a way that the aggregate bandwidth vector $\y \coloneqq (y_r)_{r\in \setR}$ maximizes the $\alpha$-fair metric.  The path-wise allocation $\x$ should also respect the link capacity constraints of the network, that read:
\begin{equation}
\label{eq:capconU}
\A\x \leq \c \Leftrightarrow  \sum_{p : j \in J_p } x_p \leq c_j \quad \forall j\in \setJ.
\end{equation}
%For $\w\in \R_+^n$, the $(\w, \alpha)$-fair utility function is defined as \[ \f^{\alpha}(\w,\x) = \sum_{s\in S} f_s^{\alpha}(w_s,x_s), \quad \mbox{ where } f_s^{\alpha}(w_s,x_s) = \left\{\begin{array}{ll} w_s\frac{x_s^{1-\alpha}}{1-\alpha} & \mbox{if } \alpha \neq 1 \\ w_s\log(x_s) &\mbox{otherwise.}\end{array} \right. \]
Therefore, we have the problem definition below:

\begin{definition}\label{defmultipath}The \emph{multi-path $\alpha$-fair allocation problem} is the problem of finding a path-wise bandwidth allocation $\x$ (and therefore its corresponding aggregate bandwidth allocation $\y = \B\x$) such that the function $f^{\alpha}$ is maximized with respect to $\y$ and the link capacity constraints $\A\x \leq \c$ are respected.
\end{definition}
%Given the set of requests and their corresponding utility function $f_r$, the network allocates bandwidth to all the requests in order to maximize the overall utility $f(\y) = \sum_{r\in \setR} f_r(y_r)$, while satisfying feasibility, i.e., the link capacity constraints, that read: 
%Thus, the underlying Markov process is described
%by time dependent generator $Q$ with the non-zero elements given by
%$q_{N,N+e_l}(t)=\lambda_l$ and $q_{N,N-e_l}(t)=\mu_l  U_l(t) \iota_{\{N_l>0\}}$\footnote{$\iota_S$ is the indicator function of an event $S$: equals $1$ when the event is true, and $0$ if not}.
%We observe that the $\alpha$-fair utility functions are non-decreasing, strictly concave, non-identically equal to $-\infty$, and upper semi-continuous. 
It is well-known that the function $f^\alpha$ admits a unique maximizer over any convex closed bounded set. In our case, this means that the optimal aggregate bandwidth allocation $\y^*$ is unique. We draw the reader's attention to the fact that the fairness is measured upon the aggregate bandwidth allocation $\y$, not upon the path-wise allocations: one request may be allocated resources along several paths which do not need to be ``fair'' to each other. Remarkably, for the unique optimal aggregate bandwidth allocation $\y^*$, there may\footnote{The unicity of $\y^*$ follows from the strict convexity of $f^\alpha$ as a function of $\y$. As the function $\x \mapsto f^\alpha(\B\x)$ is \emph{not generically} strictly convex with respect to the path-wise variable $\x$, the unicity of a path-wise optimum is no more guaranteed.} be several path-wise allocations that verify the equation $\B\x = \y^*$.
\subsection{Problem formulation}
In the rest of this article, we adopt the convex optimization terminology. Define for each $r\in \setR$ the convex cost function $g_r^\alpha(y_r)\coloneqq -f_r^\alpha(y_r)$. Then,  $g^\alpha(\y) \coloneqq \sum_{r\in \setR} g_r^\alpha(y_r)$ is a convex, closed and proper\footnote{The term \emph{closed} stands for lower semi-continuous and \emph{proper} means non-identically equal to $\infty$.} function over $\RR_+^{|\setR|}$. %Under these conditions, the formulation of the problem takes the simple form:
%
%\begin{equation}
%\min \sum_{r\in R} g_r(x_r) \tag{$P$}
%\label{P}
%\end{equation}
%$$
%\mbox{s.t.} \quad Ax\leq C,  x \geq 0
%$$
We introduce $\iota$ as the \emph{convex indicator function} of the capacity constraints:
$$\iota(\x) =
\left\{ \begin{array}{ll}
0 & \mbox{if }\A\x\leq \c, \x\geq 0\\
\infty, & \mbox{otherwise}.
\end{array}\right.$$ 
Then the multi-path $\alpha$-fair allocation problem (Def.~\ref{defmultipath}) can equivalently be formulated as the following convex program:

\begin{equation}
\min_{\y,\x} \sum_{r\in \setR}g^\alpha_r\left(y_r\right) + \iota(\x),
\label{ConvObj}
\end{equation}
\begin{equation}
\mbox{s.t.}\quad \y-\B\x = 0.
\label{ConvConst}
\end{equation}

\subsection{ADMM as an augmented Lagrangian splitting}
\label{augmentedlag}
The Alternating Directions Method of Multipliers is directly applicable to the form of the convex optimization problem~\eqref{ConvObj}--\eqref{ConvConst}. To do so, we write the augmented Lagrangian function of the problem, for a given penalty parameter $\lambda^{-1} >0$, where $\u$ is the vector of Lagrange multipliers associated to the constraints~\eqref{ConvConst}:

\begin{equation}
L_{\lambda^{-1}}(\x,\y,\u) = g^\alpha(\y) +\iota(\x) + \u^\T(\y-\B\x) + \frac{1}{2\lambda} ||\y-\B\x||^2
\end{equation}
where for two vectors $\a$ and $\b$, $\a^\T \b$ is the Euclidean product of $\a$ and $\b$ and $||\cdot||$ is the Euclidean norm.

Unlike the classic method of multipliers (see for instance \cite{bertsekas2014constrained}, Chapter 3), where the augmented Lagrangian is minimized jointly with respect to $(\x,\y)$ each time the dual variable $\u$ is updated (see Eq.~\eqref{u_up} below), the ADMM consists of minimizing the augmented Lagrangian $L_{\lambda^{-1}}(\x,\y,\u)$, alternatively with respect to $\x$ and $\y$. After some straightforward algebra, one can derive the update rules that represent each of the alternate minimizations, and summarize, through Eqs.~\eqref{y_up}--\eqref{u_up} below, one iteration of ADMM that is to be repeated till a suitable termination condition is satisfied. 

\begin{align}
\y \longleftarrow &  \argmin L_{\lambda^{-1}}(\x,\cdot,\u)\nonumber\\
 & =\argmin_{\y} g^\alpha(\y) + \frac{1}{2\lambda} ||\y-(\B\x - \lambda \u)||^2 \label{y_up}\stepcounter{equation}\tag{\theequation a}\\ %\eqqcolon \prox{\lambda g^\alpha}(\B\x - \lambda \u)  \\
\x \longleftarrow & \argmin L_{\lambda^{-1}}(\cdot,\y,\u) \nonumber\\ 
 & = \argmin_{\x} \iota(\x)  + \frac{1}{2\lambda} ||\B\x - (\y + \lambda \u)||^2 \tag{\theequation b} \label{x_up}\\ %\eqqcolon  \prox{\lambda\iota}(\y+ \lambda \u)\nonumber\\
\u  \longleftarrow & \u + \frac{1}{\lambda}(\y - \B\x)\label{u_up} \tag{\theequation c}
\end{align}
We refer to this algorithm as the \emph{centralized} algorithm.

In the light of the above update rules, one can directly remark that the centralized algorithm assumes that the function $\iota$ is globally accessible (see Eq.~\eqref{x_up}). The function $\iota$ is the indicator function of the network's capacity set, hence contains and necessitates full knowledge of the network topology and load. However, we assume that our distributed setting cannot afford such global knowledge. Indeed, the SDN controllers have the sole topology information of the underlying sub-network they have been assigned to. Thus, they are able to perform local computations only and have to operate together to achieve global consensus, which forbids the function $\iota$ from being globally available. 
\begin{remark}
The centralized algorithm~\eqref{y_up}--\eqref{u_up} benefits from the best presently known convergence rate. Indeed, formulation~\eqref{ConvObj}--\eqref{ConvConst} belongs to a class of problems with strongly convex objectives having Lipschitz gradient that respects the assumptions of Theorem~1 in \cite{Deng2016} and as such, the convergence of ADMM applied to solve them is linear with a known explicit rate. Unavoidably, we will sacrifice this convergence speed guarantee in order to abide by the rules of the distributed SDN settings.
\label{remark:linear}
\end{remark}
In the next section, we show how to break down the topology information with respect to the SDN distribution, by benefiting from the distributive properties of ADMM.

\section{\Add{Multi-agent consensus formulation}}
\label{sec:consensus}
%In this section, we show how to alleviate the cost of a global projection sub-routine in Algorithm~\ref{alg:CADMM} line~\ref{CADMMop2} by decomposing the formulation with respect to any partition of the network into SDN domains represented by a subset of links, that we now refer to as \emph{regions}. This scheme permits to reduce the size of the projections sub-problems that are easier to address and that can run in parallel. The decomposition can be orchestrated without any constraint but one needs to recall that routes may traverse them. %does not need to respect the link-route incidence matrix and routes may traverse them. 
%In this case, consensus constraints are added to enforce the same values of the bandwidth allocation over the regions that one route may traverse.   

\Add{In this section, we show how to skirt the need for the global topology information by decomposing the formulation with respect to the network links of each SDN domain\Add{.}  The global knowledge of the topology being not affordable in the distributed SDN control plane, the decomposition permits to respect the locality of information of the different domain controllers.} As we will see, our decomposition will break down the function $\iota$ by partitioning the network topology into different domains, and will naturally induce a partition of the set of requests over the network. This will therefore partition the global information and distribute it among the domains accordingly. We assume the domains can operate only with their private information, and any other information needed that belong to another controller needs to be gathered through inter-domain communication. The partition into domains can be orchestrated at the discretion of the SDN architect. Unavoidably though, domains will need to exchange information as paths may traverse many of them. %does not need to respect the link-route incidence matrix and routes may traverse them. 

\subsection{Preliminaries}
\label{prel}
We suppose that the network is split into $M$ domains, where each domain $m$ is assigned to a set of links $\setJ_m \subset \setJ$. In other words, $(\setJ_m)$ forms a \emph{partition} of the set of links $\setJ$.
Let $\setP_m$ be the set of paths traversing the domain $\setJ_m$ via some link $j\in \setJ_m$. %Thus, $\setP_m$ is the set of paths $p\in \setP$ such that for some link $j\in \setJ_m$, $j\in J_p$, and more formally, $\setP_m = \{p\in \setP:  \setJ_m \cap J_p \neq \emptyset\}$. 

We partition the set of requests $\setR$ in the following way. Each request $r$ has one source node that belongs to one unique domain $m(r)$. Define, for each domain $m$, the set of requests originating from it as $\setR_m = \{r\in \setR \mbox{ s.t. } m = m(r)\}$.  Notice that $(\setP_m)$ forms a covering of $\setP$, and $(\setR_m)$ forms a partition of $\setR$.

Now, each domain $m$ can define a set of (private, as explained later) indicator functions for each of its links: let $\iota_j$ denote the indicator function for link $j \in \setJ_m$, i.e.,

%\begin{equation}
%\iota_p(x) =
%\left\{ \begin{array}{ll}
%0 & \forall j\in J_p, \sum_{r\in j} x_r \leq C_j \\
%\infty & \mbox{otherwise} 
%\end{array}\right.
\begin{equation}
\iota_j(\x) =
\left\{ \begin{array}{ll}
0 & \mbox{if} \displaystyle \sum_{p: j\in J_p} x_p \leq c_j  \mbox{ and } x_p \geq 0 \mbox{ } \forall p \mbox{ s.t. } j\in J_p \\
\infty & \mbox{otherwise}. 
\end{array}\right.
\end{equation}
\begin{remark} \label{rem:iota}
The function $\iota_j$ depends only on the variables $x_p$ such that $j\in J_p$.
\end{remark}
%\begin{definition}
%Let $1 \leq q\leq n$ and $C>0$. Let $\theta = \{\theta_1, \ldots, \theta_q\}  \subset [1,n]$, $|\theta| = q$. We define   $S_\theta(C)$, as the set:
%
%$$S_\theta(C)= \{x\in \RR_+^n, \sum_{i = 1}^q x_{\theta_i} \leq
% C\}.$$
%The vector $z_\theta$ is defined as $(z_\theta)_i = 0$ if $i\notin \theta$, and $(z_\theta)_i = z_i$ otherwise. Thus, the linear projection of $S_\theta(C)$ onto $\{z_\theta, z\in \RR^n\}$ is a simplex of dimension $q$ embedded in $\RR^n$.
% %$S_\theta(C)$ is the inverse image of a simplex of dimension $q$ embedded in $\RR^n$ by the projection over $\RR
%\end{definition}
%Also, let us define $S_j := \dom{\iota_j}$.
%\noindent Thus, for each $j\in \setJ$, $S_j$ %\overset{\mathrm{def}}{=}S_{R_j}(C_j)\subset \RR^{|R|}$
% is the (convex, closed) capacity set of the link $j$. Finally, for $j\in \setJ$ and $\z \in \R^{|\setP|}$,  $\P(j,z)$ denotes the Euclidean projection of $\z$ onto $S_j$.

%\begin{equation}
%Q_p(x) = \argmin_{y : \iota_p(y) = 0} ||x-y||
%\end{equation}

 %We call this domain \emph{central}. %The central domain does not take part in the distribution of the algorithm and we define it only for clarity of the notations.

 %any $q$-simplex $S$ of radius $C$ embedded in $\RR^n$  (i.e., for $\theta \subset [1,n]$, $|\theta| = q$,  $S_\theta= \{x\in \RR^n, \sum_{i = 1}^q x_{\theta_i} \leq C\}$), 

To simplify the algorithm design, we eliminate the matrix $\B$ and the variable $\y$ from the formulation by plugging in directly the equation~\eqref{aggregate_eq} into the functions $g_r^\alpha$. We therefore redefine the (now private) domain functions $g^\alpha_m$ as:

%\begin{equation}
%g^\alpha_r: \R_+ \to \R, y_r \mapsto -f^\alpha_r(y_r) ,
%\end{equation} 
\begin{equation}
g_{m}^\alpha: \R_+^{\setP} \to \R, \x \mapsto \sum_{r\in \setR_m} g^\alpha_r(\sum_{p\in P_r} x_p) ,
\end{equation} 
\begin{remark} \label{rem:g}
In fact, the function $g_{m}^\alpha$ depends only on the variables $x_p$, $p\in P_r$, for $r\in \setR_m$.
\end{remark}

\subsection{Private variables and consensus form}

As from now, the controller of domain $m$ is the sole owner of the now private indicator functions $\iota_j, j\in \setJ_m$ and $g_m^alpha$. In order to evaluate its private function $g_{m}^\alpha$, domain $m$ defines a variable $\x^m$ as its private copy of the variable $\x$. As $g_{m}^\alpha$ depends only on a reduced number of variables (see Rem.~\ref{rem:g}), one only needs to define $\x^m = \left((x^m_p)_{p\in P_r}\right)_{r\in \setR_m}$.

%Each domain controller will have its own re-sized copy collection of the variable $\x$. 
%For a domain $m$, we define $\x^m$ as a private re-sized copy of $\x$ through which the controller will evaluate its private functions $g^\alpha_{mr}$. Hence, we only need to define $\x^m = \left((x^m_p)_{p\in P_r}\right)_{r\in \setR_m}$.
%Now, the path-wise bandwidth allocation $x_p$ over path $p$ is perceived by all (and only) the domains $m$ such that $p\in \setP_m$ through a private number $x^m_p$.
In the same way, each link $j$ belonging to domain $m$ is associated to a private copy  $\z^{j}$ of the vector $\x$ through which the controller will evaluate the indicator function $\iota_j$. Likewise, thanks to Rem.~\ref{rem:iota}, we only need to define $\z^{j} = (z^{j}_p)_{p: j\in J_p}$, to avoid creating unnecessary variables. 

The notation $\z^m$ will from now on be adopted to refer to the collection of variables $\{\z^{j}\}_{j\in \setJ_m}$.%The definition of $\z^{mj}$ permits the domain controller $m$ to decouple

In short, the minimal information that a domain $m$ needs is:
\begin{enumerate}
\item the SDN-domain topology $\setJ_m$ and the load on each link $j\in \setJ_m$, that is, the sub-matrix $(A_j)_{j\in \setJ_m}$.  
%\item the path-link incidence sub-matrix $(A_{j})_{j\in \setJ_m}$, and finally,
\item the knowledge of the objective functions $g^\alpha_r$ for each $r \in \setR_m$.
\end{enumerate}

Following the above notations, we can reformulate the objective~\eqref{ConvObj} as the following:

\begin{equation}
%&\displaystyle\sum_{r\in \setR} g_r^\alpha(\x) + \sum_{j\in \setJ} \iota_j(\z)
\displaystyle \sum_{m = 1}^M  h^m(\x^m, \z^m),
\end{equation}

where the functions $h^m$ are defined as:
\[\displaystyle h^m(\x^m,\z^m) \coloneqq  g_{m}^\alpha(\x^m) + \sum_{j\in \setJ_m} \iota_j(\z^{j}).\]
Finally, the controllers will enforce a consensus value among all the copies of a same variable via the following (global) indicator function:

\begin{equation}
\label{chi_consensus}
 \chi(({\x}^m)_m, ({\z}^{m})_m) = \left\{\begin{array}{ll} 0 & \mbox{if } z^{j}_p - x^m_p = 0 \\ &  \forall m, \forall p\in P_r, \forall r\in \setR_m, \forall j\in J_p\\ \infty &\mbox{otherwise.} \end{array}\right.
\end{equation}

Hence, the new function to minimize to solve our problem is simply $\sum h^m +\chi$. We separate it further to end up with the classic 2-block form applicable to ADMM by introducing special copies ${\x'}^m$ and ${\z'}^m$ of the collections of variables $\x^m$ and $\z^m$, respectively for the composite term $\chi$. Considering the present setting, one can verify that our problem takes the equivalent separated form:
%\begin{align}
%\min & \displaystyle\sum_{m = 1}^M   g^m(\x^m, (\z^j)_{j\in \setJ_m}) \label{domain_objective}\\
%\mbox{s.t.   } & z^j_r - x^m_r = 0 \quad \forall m, \forall r\in \setR_m, \forall j\in J_r \label{domain_consensus}
%\end{align}
%
%Finally, we adopt a more convenient form of the optimization problem~\eqref{domain_objective}-\eqref{domain_consensus} that reduces the number of variables of the augmented Lagrangian form. We lastly define the indicator function of the feasible set~\eqref{domain_consensus} and that we call $\chi((\x^m)_m, (\z^j)_j)$. The reader can easily verify that the problem~\eqref{domain_objective}-\eqref{domain_consensus} is equivalent to the following final problem:

\begin{align}
\min & \displaystyle\quad \sum_{m = 1}^M   h^m(\x^m, \z^m) + \chi(({\x'}^m)_m, ({\z'}^m)_m)\label{final_objective}\\
\mbox{s.t.   } &\qquad \z^j = {\z'}^j \quad \forall j \in \setJ \label{f_consensus} \\
& \qquad \x^m = {\x}'^m \quad \forall m =1,\ldots, M.\label{final_consensus}
\end{align}

To recap, \emph{we have artificially separated the objective function by creating a minimal number of copies of the variable} $\x$ \emph{in order to fully distribute the problem}. Now, instead of a global resource allocation variable, several copies of the variable account for how its value is perceived by each link of each domain. This new formulation can be interpreted as a multi-agent consensus problem formulation where domain $m$ has cost $h^m$ and has to agree on the values it shares with other domains. To enforce an intra- (local) and inter- (global) domain consistent value of the appropriate allocation, consensus constraints \eqref{chi_consensus} are added to the problem. Those consensus constraints are now handled by the agent $\chi$ that will represent the communication steps required at each iteration between domains. 
%These considerations permit next to write our final distributed consensus model where each agent only has access to local information.
  
%\subsection{Fast Distributed ADMM}
%  
%We can finally distribute ADMM by putting into practice the tricks described in the previous section. Then, the general consensus form of the problem can be expressed as follows.
% 
%  \begin{equation}
% \min \sum_{r\in R} g_r(z_{0r}) + \sum_{j\in J} \iota_j(z_j)
% \end{equation}
% \begin{equation}
% z_{jr} = z_{lr} \quad \forall r\in R_j\cap R_l \quad \forall j,l \in \{0\}\cup J
% \label{general consensus}
% \end{equation}
%where $z_j = (z_{jr})_{r\in R_j} \in \RR_+^{|R_j|}$. By applying ADMM to this formulation and using again Fact~\ref{fact:def} 

\subsection{Update rules}
In the same fashion as in Section~\ref{augmentedlag}, we write the augmented Lagrangian form of Formulation~\eqref{final_objective}--\eqref{final_consensus}, and obtain, after some simplification, Algorithm~\ref{FDADMM} (Fast Distributed (FD)-ADMM). The variables $\u^m$ and $\v^m$ are the dual variables associated to the constraints~\eqref{f_consensus} and \eqref{final_consensus} respectively. \\
Alg.~\ref{FDADMM} can be summarized into the three following steps:
\begin{description}
\item[ \bf Stage 1)]\hspace{13pt}Objective minimization -- solve :
 \begin{align} 
  & \argmin_{(\x,\z)}\sum_{m = 1}^M \Bigg\{  h^m(\x^m, \z^m)+ \nonumber \\ 
  &  \frac{1}{2\lambda} \left(||\x^m-{\x'}^m + \lambda \v^m||^2 +||\z^m-{\z'}^m + \lambda \u^m||^2\right) \Bigg\}\label{stage1} 
 \end{align}
\item[\bf Stage 2)]\hspace{13pt}Consensus operation -- setting $\x^m$ and $\z^m$ as the result of Stage 1), solve :
\begin{align}
 & \argmin_{(\x',\z')}\chi(({\x'}^m)_m, ({\z'}^m)_m)+ \nonumber\\
 & \frac{1}{2\lambda} \sum_{m = 1}^M \left(||\x^m-{\x'}^m + \lambda \v^m||^2 +||\z^m-{\z'}^m + \lambda \u^m||^2\right) \label{stage2}
\end{align}
\item[\bf Stage 3)]\hspace{13pt}Dual variables update -- see Alg.~\ref{FDADMM} lines~\ref{dualj} and \ref{dualx}.
\end{description}

For presentation purposes we presented those three stages in Alg.~\ref{FDADMM} in the order 2), 3), 1)  to start with the information gathering and end with the information broadcasting of each domain controller. Of course, it has no incidence on the algorithm behavior as long as the three steps are executed in the correct order. 
%In Algorithm~\ref{FDADMM}, the vectors $(\v^m)_m$ and $\u = (\u^m)_m$are the dual variables associated to constraints~\eqref{f_consensus} and \eqref{final_consensus}, respectively. The separability property of the proximal operator is used and forms a fully parallelized update rule for each domain:
%
%\begin{equation}
%\prox{\lambda g^m}(\v^m,\u^m) =  (\q^m,\p^m) \nonumber
%\end{equation} 
%\begin{equation}
%\mbox{where  }  \left\{\begin{array}{l} (q^m_p)_{p\in P_r} =\prox{\lambda g^\alpha_r}((v^m_p)_{p\in P_r})  \forall r\in \setR_m \nonumber \\
%				\p^{mj} = \P(j,\u^{mj})  \forall j\in \setJ_m.
%				\end{array}\right. \nonumber
%\end{equation}

By separability of Eq.~\eqref{stage1}, the objective minimization stage boils down to the parallel minimization of each term of the sum by the corresponding domain controller. In details, the controller of domain $m$ can minimize the term of index $m$ in the sum \eqref{stage1} with respect to $\x^m$ and $\z^m$, \emph{simultaneously}, meaning that the minimization operations with respect to $\x^m$ and $\z^m$ are \emph{independent}. 

The corresponding update rules can be stated as follows:

{\noindent\bf Stage 1) a) The minimization over $\x^m$:}  \[\displaystyle\argmin_{\x^m} \sum_{r\in \setR_m}\left\{ g^\alpha_m(\x^m) + \frac{1}{2\lambda} \sum_{p\in P_r}||x_p^m - {x_p'}^m +\lambda v_p^m||^2\right\}.\]
This problem is separable with respect to $r\in \setR_m$. Hence, we minimize it term by term. A term of index $r$ of the above sum is differentiable, strictly convex (because of the 2-norm term) and coercive (near $0$ and $\infty$) with respect to the positive variables $(x_p^m)_{p\in P_r}$. Therefore, a unique minimum exists and is obtained at the unique critical point. By differentiation, the critical point verifies:
\begin{equation*}
x^m_p - {x_p'}^m + \lambda v^m_p = \frac{\lambda w_r}{\displaystyle(\sum_{r\in P_r} x^m_p)^\alpha}.
\end{equation*}
Thus, by setting $y_r \coloneqq \sum_{r\in P_r} x^m_p$, one has:
\begin{equation}
y_r^\alpha (x_p^m -{x_p'}^m + \lambda v^m_p) = \lambda w_r.
\label{eq:intermediary}
\end{equation}
Summing Eq.~\eqref{eq:intermediary} over $p\in P_r$, a necessary condition for $y_r$ is:
\begin{equation}
y_r^{\alpha +1} - y_r^\alpha\sum_{p\in P_r} \{x_p' - \lambda v_p^m\} -\lambda w_r = 0.
\label{eq:proximalyr}
\end{equation}
A quick study of the left hand side of Eq.~\eqref{eq:proximalyr} shows that there is a unique positive solution. Therefore, it is an uni-variate  unconstrained problem that can be solved efficiently with standard root finding algorithms (particularly when $\alpha$ is an integer, it boils down to finding the unique positive root of a polynomial of degree $\alpha+1$).
Eq.~\eqref{eq:intermediary} in turn yields the path-wise update rule:
\begin{equation}
x^m_p = \frac{\lambda w_r}{y_r^\alpha} + {x_p'}^m-  \lambda v_p^m.
\label{eq:xupdatefinal}
\end{equation}

{\noindent\bf Stage 1) b) The minimization over $\z^m$:}
 \[\displaystyle\argmin_{\z^m} \sum_{j\in \setJ_m}\left\{ \iota_j(\z^j) + \frac{1}{2\lambda} \sum_{p: j\in J_p}||z_p^j - {z_p'}^j +\lambda u_p^j||^2\right\}.\]
Likewise, this problem is separable with respect to $j\in \setJ_m$. Hence, we minimize it term by term. The minimization of the term $j$ boils down to operating the Euclidean projection of the point $\phi^j =({z_p'}^j -\lambda u_p^j)_{p: j\in J_p}$ onto the capacity set of the link $j$. That is, we find the closest point\footnote{With respect to the 2-norm.} to $\phi^j$ lying in the set $\{(z_p^j)_{p:j\in J_p} \geq 0 \mbox{ s.t. } \sum_p z_p^j \leq c_j\}$. This set is in fact a simplex of dimension $n_j =\mathrm{Card}(\{p:j\in J_p\})$ and radius $c_j$, and this operation can be done \cite{chen2011projection} with a complexity dominated by the one of sorting a list of length $n_j$ (thus, $\mathcal{O}(n_j \log(n_j))$ in average).

%After some calculus, the objective minimization stage boils down to the parallel update rules found in Alg.\ref{FDADMM} l.\ref{projectionstep} and l.\ref{updateprox}. 

\noindent{\bf Stage 2) The consensus operation:}
As for the  consensus operation stage, it suffices\footnote{The minimization of Eq.~\eqref{stage2} can be done with a straightforward calculus that we ommit here. Moreover, we explain in \cite{allybokus2017real} how to obtain this simplified expression.} to build the consensus point $\tilde{\bm{z}} \in \R^{\setP}$, where:
\begin{equation}\label{eq:average}
\tilde{z}_p = \frac{1}{1+ |J_p|}\left(x^{m(r)}_p + \sum_{\tiny{\begin{matrix}m =1 \ldots M\\ j\in \setJ_m\cap J_p\end{matrix}}} z^{j}_p\right) \quad \forall r, \forall p\in P_r. 
\end{equation}
%\mbox{and} & & \p^j = \P(j,\p^j) \quad \forall j\in \setJ_m\nonumber %\left(\left(\prox{\lambda g^\alpha_r}(q_r)\right)_{r\in \setR_m}, \left( \P(j,\p^j)\right)_{j\in \setJ_m}\right) \nonumber
  
%{ \color{red}{To update the consensus variables $\tilde{z}_r$, we use the fact that the Euclidean projection of a point $y\in \RR^n$ onto the diagonal is simply its average $\frac{1}{n} \sum y_i \mathbf{1}$. Hence, if $I$ denotes the indicator function of the feasible set~\eqref{general consensus}, we have: $$\forall r\in R \quad \prox{\lambda I} (u)_r = \frac{1}{|J_r|+1}\left( \sum_{l \in r} u_{lr} + u_{0r}\right).$$  This yields the simple update rules at lines~\ref{alg2:comm} and \ref{send}\footnote{These updates rules are also simplified using the straightforward fact that the sum $\sum_{l\in r}  u_{lr}$ is constant. It can thus be fixed to $0$ by initialization.}.\\ Notably, even in the distributed case, each domain $p$ can compute at each iteration a \emph{globally} feasible allocation $z_{*r}$ for each of the routes $r \in R_p$ (see Proposition~\ref{claim:feasible}).}}
 
In order to produce this point, domain $m$ communicates its contribution $\dot{\z}^m$ to the sum encountered in the average~\eqref{eq:average} (l.\ref{alg2:comm}). Then the actual consensus value for each component $p$ is recovered by dividing by the number of existing copies of variables with index $p$. Let $r$ be the request such that $p\in P_r$. Then, one can check that for each path $p$, this number equals $|J_p|+1$: one copy per link and one copy for the domain $m$ such that $m(r) = r$ (l.\ref{form}).

 \begin{algorithm}[t!]
 \caption{Fast Distributed ADMM (FD-ADMM)}\label{FDADMM}
 \begin{algorithmic}[1]
 \Procedure{}{}$\mbox{\textbf {of domain} } m${}
 \\\hrulefill
 \Statex {\color{blue} \sc Stage 2):}
 \State Collect $\dot{\bm{z}}^q$ from neighbors \label{receive}
 \State \mbox{\sc Form} $\tilde{\bm{z}}^m_{p} = \frac{1}{|J_p|+1} \left(\sum_{n: p\in \setP_n} \dot{\z}_{p}^{n}\right)$ \Comment{$\forall r\in \setR_m, p\in P_r$} \label{form}
\\\hrulefill
 \Statex {\color{blue} \sc Stages 3) and 1):}
 %\begin{multicols}{2}
 %\tiny
 %\For{$j\in \setJ_m$}
 	\State $ u^{j}_{p} \gets u^{j}_{p} +\frac{1}{\lambda}(z^{j}_{p} -\tilde{{z}}^m_{p})$\Comment{$\forall p \mbox{ s.t. } j\in J_p$} \label{dualj}
 	%\EndFor
 	\State Update $\z^j$ as the Euclidean projection of 
 	\Statex $(\tilde{z}^m_p - \lambda u^j_p)_{p: j\in J_p}$ onto the capacity set of 
 	\Statex link $j$\Comment{$\forall j\in \setJ_m$} 
 	\Statex (see {\bf 1. b)}
 	
 	%Minimize the term with index $m$ in expression~\eqref{stage1} \label{projectionstep}
 	%\Statex \hspace{11pt} over $\z^{m}$ where $\z'^m = \tilde{\z}^{m}$  %$\z^j \gets \P(j,\tilde{\z}^m - \bm{u}^j)$
\\
%\For{$r\in \setR_m, p\in P_r$}
\State $v^m_{p} \gets v^m_{p} + \frac{1}{\lambda}(x^m_{p} - \tilde{z}^m_{p})$ \label{dualx}
%\EndFor
\State Update $x^m_p$ as in Eq.~\eqref{eq:xupdatefinal} \Comment{$\forall r\in \setR_m, p\in P_r$}
%\State Minimize the term with index $m$ in expression~\eqref{stage1}\label{updateprox}
%\Statex \hspace{11pt} over $\x^m$ where $\x'^m = \tilde{\z}^{m}$ % $ x^m_{r} \gets \prox{\lambda g^\alpha_r}(\tilde{z}^m_{r} - v^m_{r})$ 
%\end{multicols}
\\\hrulefill
\Statex {\color{blue}\sc Build information to be sent to neighbors:}
%\For{$r\in \setR_m, p \in P_r$ visiting $m$} 
\State ${\dot{z}}^{m}_p \gets	\displaystyle\sum_{j\in J_p  \cap \setJ_m} z^j_{p}$\Comment{$\forall p\in \setP_m$}\label{begincom}\label{alg2:comm}
\State $ {\dot{z}}^{m}_p \gets {\dot{z}}^{m}_p + x^m_p$ \Comment{$\forall r\in \setR_m$} 
\State Send ${\dot{\bm{z}}^m} $ to neighbors\label{send}

\EndProcedure
 \end{algorithmic}
 \end{algorithm}

%In Algorithm~\ref{DADMM}, $I_r = \{q\in [0,P] \quad r\in R_q\}$ is the set of region indices (including the central region index 0) $r$ belongs to. $I_r$ is in fact the set of regions involved in the setting of the route $r$.
%This form is of interest: the projections on each region are less complicated than the projection on the whole capacity set, as the constraints are less numerous. Further, we can reduce the size of the regions so that an acceptable projection loop computation cost is reached. The geometry, size, cardinality of the decomposition can be chosen in a suitable way to best match the performance expectations of the algorithm given the available computing power and its distribution over the network controllers. 
\noindent \textbf{Communication among domain controllers:} In FD-ADMM, \emph{only domains that do share a path together have to communicate}. The communication procedures among the domain controllers are described between the lines~\ref{begincom} and \ref{send}. In these steps, the domains gather from and broadcast to adjacent domains the sole information related to paths that they have in common. In particular, domains are blind to paths that do not visit them, and can keep their internal paths secret from others. In details, after each iteration of the algorithm, each domain $m$ receives the minimal information from other domains such that $m$ is still able to compute a local consensus value $\tilde{z}^m$ (l.~\ref{form}). Next, domain $m$ sends back to the neighboring domains a contribution $\dot{z}^m$ so that they can recover a consensus value of the path-wise allocation.

\emph{Communication overhead:} In terms of overhead, we can easily evaluate the number of floats transmitted between each pair of domain at each iteration. At each communication, domain $m$ must transmit $\dot{z}^m_p$ for each path $p\in \setP_m$ to each other domain that $p$ traverses. The variable $\tilde{z}$ does not need to be centralized or transmitted between controllers. Each domain controller may actually have a copy of it recovered locally at negligible cost (see l.\ref{form}). Note that the value of $|J_p|$ in l.\ref{form} can be recovered by each domain through a \emph{unique} message passing at the establishment of the path $p$, therefore we can dismiss the global nature of this information. Hence, domain $m$ transmits in total $\sum_{n \neq m} |\setP_n\cap \setP_m|$ floats to the set of its peers. As a comparison, in a distributed implementation of the standard dual algorithm given in \cite{mccormick2014real}, each domain $m$ would transmit in total $\sum_{n\neq m} |\{j\in \setJ_m, \exists p\in P_n \mbox{ s.t. }  j\in J_p\}|$ floats to the set of its peers, which is bounded by $(M-1)|\setJ_m|$ as $|\setP|$ grows. 

\Add{\noindent \textbf{Practical implementation.} Domain controllers implementing FD-ADMM are communicating bandwidth allocation decisions to SDN switches using standard protocols such as OpenFlow, PCEP (Path Computation Element Protocol) or BGP-LS (Border Gateway Protocol for Link State) available at their south-bound interface. To exchange information with other domain controllers about optimization variables or network states, i.e. east-west communications, domain controllers may use the iSDNi or IOCONA interfaces, in the case of the two most popular open source SDN controllers, OpenDayLight \cite{opendaylight} and ONOS \cite{onosproject} respectively.}

%execute their updates on the requests variables that they have in common (\emph{gather}) and each of them obtains an update of the new $\tilde{z}_r$ values (\emph{broadcast}) for update line~\ref{alg2:dual} and the next iteration's update line~\ref{alg2:proj} (note that each region $p$ only has access the indices within $R_p$ of the variable $\tilde{z}$). 
\noindent \textbf{Feasibility preservation:} A potential drawback of the distributed approach is the potential feasibility violation by the iterate $\tilde{z}$. However, we have the following positive result.
\begin{claim}
\label{claim:feasible}
At each iteration of FD-ADMM, the point $\z^\dagger$, defined as $z^\dagger_{p} = \min_{j\in J_p} z^{j}_{p}, \mbox{  } \forall p\in \setP$, is feasible, and $\B\z^\dagger$ converges to the optimal aggregate bandwidth allocation. 
\end{claim}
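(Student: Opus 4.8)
The plan is to split Claim~\ref{claim:feasible} into its two assertions and treat them separately: the feasibility of $\z^\dagger$, which in fact holds at every iteration and follows directly from the $\z$-update, and the convergence of $\B\z^\dagger$ to the optimum, which rides on the convergence theory of two-block ADMM together with the algebraic structure of the consensus constraint~\eqref{chi_consensus}.

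For feasibility I would first recall that in Stage 1) b) each link copy $\z^{j}$ is, by construction, the Euclidean projection of $\phi^{j}$ onto the capacity set of link $j$; hence $\z^{j}$ itself lies in that set, so $z^{j}_p\ge 0$ for every $p$ with $j\in J_p$ and $\sum_{p:\,j\in J_p} z^{j}_p\le c_j$. Now fix any link $j$. For each path $p$ with $j\in J_p$ the definition of $\z^\dagger$ gives $0\le z^\dagger_p=\min_{j'\in J_p} z^{j'}_p\le z^{j}_p$, and summing over such $p$ yields $\sum_{p:\,j\in J_p} z^\dagger_p\le\sum_{p:\,j\in J_p} z^{j}_p\le c_j$. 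Since $j$ was arbitrary and $\z^\dagger\ge 0$ componentwise, $\z^\dagger$ lies in the capacity region, which is precisely the feasibility statement; note this argument uses only the current iterate, so it is valid at every iteration, not just in the limit.

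For the convergence part I would proceed in three steps. First, observe that Algorithm~\ref{FDADMM} is, after the simplifications already performed, exactly two-block ADMM applied to \eqref{final_objective}--\eqref{final_consensus}, whose two blocks carry the closed, proper, convex function $\sum_m h^m$ and the subspace-indicator $\chi$; since the original problem is feasible with finite value, a saddle point of the associated Lagrangian exists, so the standard ADMM convergence results give that the primal iterates converge to a solution of \eqref{final_objective}--\eqref{final_consensus} and that the residuals of \eqref{f_consensus}--\eqref{final_consensus} vanish. Second, I would use that Stage 2) minimises $\chi$ (the indicator of a nonempty closed convex subspace) plus a quadratic, hence its output always satisfies the consensus constraint exactly: after every Stage 2) we have ${z_p'}^{j}={x_p'}^{m(r)}$ for all $p\in P_r$ and all $j\in J_p$; combined with $\z^{j}-{\z'}^{j}\to 0$ this forces $z^{j}_p-z^{j'}_p\to 0$ for all $j,j'\in J_p$, so $z^\dagger_p$ has the same limit as each $z^{j}_p$. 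Third, by \eqref{chi_consensus} the limit point of the iterates is a tuple in which all copies of $x_p$ coincide with a common value $x^*_p$, and $\x^*$ is optimal for the original, undecomposed $\alpha$-fair problem; therefore $z^\dagger_p\to x^*_p$ and $(\B\z^\dagger)_r=\sum_{p\in P_r} z^\dagger_p\to\sum_{p\in P_r} x^*_p=(\B\x^*)_r$, which is the optimal aggregate allocation, unique by the strict convexity of the $g^\alpha_r$ in the aggregate variable $y_r$.

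The main obstacle is concentrated in the convergence half: one must invoke a version of the ADMM convergence theorem that delivers convergence of the \emph{iterates}, not only of the residuals and of the objective value, and verify its hypotheses for the present data (closedness, properness and convexity of every $h^m$ and of $\chi$, and a relative-interior/feasibility condition ensuring that a Lagrangian saddle point exists, which is light here since all coupling constraints are affine). The remaining work — identifying the common limit of the reduced-dimension private copies $(\x^m,\z^m)$ with a single optimal allocation $\x^*$ of the original problem through~\eqref{chi_consensus}, and checking uniqueness of $\B\x^*$ — is then routine, and the feasibility half requires no limiting argument at all.
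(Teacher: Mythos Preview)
Your argument is correct and follows the natural line: feasibility of $\z^\dagger$ is an immediate consequence of each $\z^{j}$ being the Euclidean projection onto the capacity simplex of link $j$, while convergence of $\B\z^\dagger$ follows from the standard iterate-convergence theorem for two-block ADMM combined with the observation that the consensus constraint~\eqref{chi_consensus} forces all link copies $z^{j}_p$ (and hence their minimum $z^\dagger_p$) to share the same limit, namely the optimal $x^*_p$, with $\B\x^*$ unique by strict concavity of the $\alpha$-fair utilities in the aggregate variable.

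The only contrast with the paper is that its proof of Claim~\ref{claim:feasible} is a bare citation to \cite{allybokus2017real}, Proposition~1, rather than an in-text argument; your write-up is a self-contained version of what that reference presumably contains. One small caveat worth making explicit: the feasibility argument applies once $\z^{j}$ has been produced by Stage~1)\,b), so it holds from the first projection onward (or at $k=0$ provided the $\z^{j}$ are initialised in the capacity sets, e.g.\ at zero).
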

\begin{proof}
See \cite{allybokus2017real}, Proposition 1.
%For any link $j$, let $m(j)$ denote the (unique) domain such that $j\in \setJ_{m(j)}$. We have by line~\ref{projectionstep} of Algorithm~\ref{FDADMM} that $\z^{m(j)j}$ is feasible in link $j$. That is, $\sum_{p: j\in J_p} z^{m(j)j}_{p} \leq c_j$. Define $z^\dagger_{p} = \min_{j\in J_p} z^{m(j)j}_{p}$. Then, for each link $j$:
%\begin{equation}
%\sum_{p: j\in J_p} z^\dagger_{p} \leq \sum_{p:j\in J_p} z^{m(j)j}_{p} \leq c_j.
%\end{equation}
%Therefore, no capacity is violated by the allocation $z^\dagger_{p}$. At the optimum, the consensus is reached. Thus $(z^{\dagger }_{p})$ defines a feasible sequence that converges to the optimum.
\end{proof}

\noindent %The number $z^\dagger_{r}$ introduced in Proposition~\ref{claim:feasible} above in fact corresponds to the introduced variable of the same name FD-ADMM. 
Thus, in a certain way, for sufficiently loaded and communicating domains (i.e. the $|\setP_m\cap \setP_n|$ are large enough) we sacrifice some overhead (counted on a per iteration basis) compared to standard dual methods, but in exchange for anytime feasibility, a major feature that dual methods do not generically provide.

\noindent {\bf{Penalty parameter initialization:}} It is well known that the penalty parameter $\lambda$ highly conditions the convergence speed of ADMM. In \cite{allybokus2017lower}, it has been shown numerically that the optimal penalty $\lambda_*$ in terms of convergence speed for the (centralized) algorithm given in the update rules~\eqref{y_up}--\eqref{u_up} provides a satisfactory performance of FD-ADMM. The analysis of this aspect of the design of FD-ADMM goes beyond the scope of this article, but in this paragraph, we adapt a multi-path version of Theorem~1 in \cite{allybokus2017lower} where a lower bound on the (path-wise) $\alpha$-fair bandwidth allocation in the case of single paths is derived. It is shown there how this lower bound permits to define a natural penalty parameter initialization that boosts the algorithm performance. Here, we generalize the result of \cite[Th. 1]{allybokus2017lower} to our case of multi-path routing, and establish a lower bound on the \emph{aggregate bandwidth} allocation.  

\Add{To do so, one needs the following definitions: we define, for each request $r$, the set $\setR(r)$ of \emph{other requests that visit some link used by $r$}: $\setR(r) \coloneqq \left\{ s\in \setR: \left( \cup_{q\in P_s} J_q\right) \cap \left( \cup_{p\in P_r} J_p\right) \neq  \emptyset \right\}$. Lastly, we define the \emph{utopic}\footnote{In fact, the utopic path-wise and aggregate bandwidth allocations correspond to the path-wise and the aggregate bandwidth allocations, respectively, that a request can get if all other requests get an allocation of 0, which justifies the term \emph{utopic}.} 
%path-wise allocation $b_p$ of a path $p$, the \emph{utopic} 
aggregate bandwidth allocation $a_r$ of a request $r$, and its \emph{local midpoint value} $\varrho_r$, respectively, as the following: 
\begin{equation}
%b_p\coloneqq \min_{j\in J_p} c_j, \quad 
a_r \coloneqq \max_{{\scriptsize{\begin{matrix}
x \geq 0, \A\x \leq \c, \\
\forall s\in \setR(r) - \{r\}, (\B\x)_s= 0
\end{matrix}}}} (\B\x)_r, \quad \varrho_r = \frac{w_r}{\displaystyle \sum_{s\in \setR(r)} w_s} a_r.
 \end{equation}}
 \vspace{-6mm}
\begin{claim}
Let $r_\dagger \coloneqq \argmin_{s\in \setR} \varrho_s$ be the request with the smallest local midpoint value. Then, one can bound from below the optimal \emph{aggregate bandwidth} allocation $\y^* \geq \d$ where:
\begin{itemize}
\item if $\alpha \geq 1 \quad  d_r =\varrho_{r_\dagger} ^{1-1/\alpha} \varrho_r ^{1/\alpha}$
\item if $0<\alpha \leq 1 \quad d_r =\left(\displaystyle  \frac{w_r a_r}{\displaystyle\sum_{s\in \setR(r)} w_s a_s^{1-\alpha}}\right)^{1/\alpha}.$
\end{itemize}
\end{claim}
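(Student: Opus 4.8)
The plan is to read off both bounds from the first-order optimality (variational) inequality satisfied by the $\alpha$-fair optimum, tested against one carefully chosen feasible competitor per request. Let $\mathcal{C}\coloneqq\{\,\B\x : \x\ge 0,\ \A\x\le\c\,\}$ be the polytope of attainable aggregate allocations, so that $\y^*$ minimizes the convex map $\y\mapsto\sum_{r\in\setR} g^\alpha_r(y_r)$ over $\mathcal{C}$; since $(g^\alpha_r)'(y)=-\,w_r y^{-\alpha}$, this amounts to
\[
\sum_{r\in\setR} w_r\,(y^*_r)^{-\alpha}\,(\hat y_r-y^*_r)\ \le\ 0 \qquad\text{for every } \hat\y\in\mathcal{C}.
\]
A routine preliminary records $a_r>0$ and $y^*_r>0$ for every $r$: routing a small positive amount on one path of each request exhibits a strictly positive feasible aggregate allocation, and shifting $\y^*$ towards it would strictly lower the objective if some $y^*_r$ vanished (for $\alpha\ge1$ because the cost is then infinite, for $0<\alpha<1$ because of the infinite marginal $(g^\alpha_r)'(0^+)$). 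The same zeroing-out idea also gives $y^*_s\le a_s$: keeping $\x^*$ only on the paths of $s$ and discarding the rest produces a point admissible in the program defining $a_s$ whose $s$-component equals $y^*_s$.

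The core step is the construction of the competitor. Fix a request $r$ and let $\bar\x$ attain $a_r$. By the definition of $\setR(r)$, no path of a request outside $\setR(r)$ uses a link of $\bigcup_{p\in P_r}J_p$; moreover every request of $\setR(r)-\{r\}$ is constrained to zero in the program defining $a_r$, so we may replace $\bar\x$ by $\x^{(r)}$, equal to $\bar\x$ on the paths of $r$ and zero elsewhere — still an optimizer, now supported on $P_r$ only. Define $\hat{\x}$ to equal $\x^{(r)}$ on the paths of $r$, $0$ on the paths of the requests of $\setR(r)-\{r\}$, and an optimal routing $\x^*$ on the paths of the requests outside $\setR(r)$. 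Feasibility of $\hat{\x}$ follows by a link-by-link check: a link of $\bigcup_{p\in P_r}J_p$ carries only the flow of $\x^{(r)}$ (the outside requests avoid it), while every other link carries only a partial sum of the flow of $\x^*$. The corresponding aggregate allocation has $\hat y_r=a_r$, $\hat y_s=0$ for $s\in\setR(r)-\{r\}$, and $\hat y_s=y^*_s$ otherwise. Substituting it into the variational inequality, the terms indexed outside $\setR(r)$ cancel and, after rearrangement, one obtains the \emph{master inequality}
\[
w_r\,a_r\,(y^*_r)^{-\alpha}\ \le\ \sum_{s\in\setR(r)} w_s\,(y^*_s)^{1-\alpha}\,,
\]
valid for every $r\in\setR$.

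From here the two regimes are short. For $0<\alpha\le1$ the exponent $1-\alpha$ is nonnegative, so $y^*_s\le a_s$ gives $(y^*_s)^{1-\alpha}\le a_s^{1-\alpha}$; inserting this into the master inequality and solving for $y^*_r$ yields $y^*_r\ge\left(\frac{w_r a_r}{\sum_{s\in\setR(r)}w_s a_s^{1-\alpha}}\right)^{1/\alpha}=d_r$. For $\alpha\ge1$ the exponent is nonpositive, so one first bootstraps: applied to $r_0\coloneqq\argmin_{s}y^*_s$, the master inequality together with $(y^*_s)^{1-\alpha}\le(y^*_{r_0})^{1-\alpha}$ gives $y^*_{r_0}\ge\varrho_{r_0}\ge\varrho_{r_\dagger}$, hence $y^*_s\ge\varrho_{r_\dagger}$ for all $s$ (already the bound $d_{r_\dagger}=\varrho_{r_\dagger}$). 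Feeding $(y^*_s)^{1-\alpha}\le\varrho_{r_\dagger}^{1-\alpha}$ back into the master inequality and solving for $y^*_r$ gives $(y^*_r)^{\alpha}\ge\varrho_r\,\varrho_{r_\dagger}^{\alpha-1}$, i.e. $y^*_r\ge\varrho_{r_\dagger}^{1-1/\alpha}\,\varrho_r^{1/\alpha}=d_r$. (Both expressions collapse to $\varrho_r$ at $\alpha=1$, as expected.) In either regime $y^*_r\ge d_r$ for every $r$, that is, $\y^*\ge\d$.

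I expect the feasibility of $\hat{\x}$ to be the only genuinely delicate point: it rests on the link-disjointness baked into the definition of $\setR(r)$ — precisely what lets the requests outside $\setR(r)$ retain their optimal flow while never contending with the utopic flow of $r$ — and on the reduction of the utopic routing to one supported on $P_r$. Everything downstream of the master inequality, together with the positivity and the $y^*_s\le a_s$ preliminaries, is elementary.
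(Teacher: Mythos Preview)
Your argument is correct. The variational inequality for the $\alpha$-fair optimum, the competitor $\hat\x$ built by splicing the utopic flow of $r$ with the optimal flow of the requests outside $\setR(r)$, the resulting master inequality $w_r a_r (y^*_r)^{-\alpha}\le\sum_{s\in\setR(r)} w_s (y^*_s)^{1-\alpha}$, and the two-regime extraction of the bounds all check out; the link-by-link feasibility verification you flag as delicate is exactly right and hinges precisely on the definition of $\setR(r)$.

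As for comparison with the paper: there is nothing to compare against. The paper does not prove this claim at all --- its ``proof'' is a bare citation to \cite{allybokus2017lower}, Theorem~1, which treats the single-path case. You have supplied the self-contained multi-path argument that the paper omits. Your route (first-order optimality tested against a tailored feasible competitor, then a bootstrap on $\min_s y^*_s$ for $\alpha\ge 1$) is the natural one and is almost certainly what the cited reference does in the single-path setting; the only genuinely new ingredient in the multi-path generalization is the observation that the utopic maximizer for $a_r$ can be taken supported on $P_r$ alone, which you handle cleanly.
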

\begin{proof}
See \cite{allybokus2017lower}, Theorem 1.
\end{proof}
Using this lower bound, we now generalize the penalty parameter that is formulated in \cite{allybokus2017lower}:

\vspace{2mm}
\noindent \textit{Penalty parameter initialization:} \begin{equation} \lambda_* = \alpha \left( \min \frac{w_r}{a_r^{\alpha+1}} \max \frac{w_r}{d_r^{\alpha+1}}\right)^{-\frac{1}{2}}.\end{equation}

\section{Reconfiguration with switching costs}
\label{sec:reconfiguration}

In this section, we show that our FD-ADMM formulation is flexible enough to permit to solve a related relevant problem, by simply modifying the private objective functions $h^m$. 
We assume a traffic is already established with a current resource allocation, and that its requirements can vary on-the-fly. One can model a variation of the traffic requirements by a change in priorities between flows via a variation of the weight vector $\w$, the computation of a new path for an existing (or not) request, the elimination of a path for a request, etc. Under these circumstances, FD-ADMM can continuously generate feasible solutions to adapt the path-wise allocation to the new requirements in real-time. In fact, by doing so, the controllers may improve the optimality gap, and thus satisfy the demands with a better fairness measure as they evolve. However, enforcing a new resource allocation too often requires overwhelming flow reconfigurations rules that can cause Quality-of-Service degradation or system instability \cite{paris2016controlling}. Therefore, we introduce a switching cost to limit the number of reconfigurations. The goal for the controllers will thus be to perform a trade-off between fairness and switching cost. %we allow the SDN domains to perform a trade-off between fairness and reconfiguration cost.

The introduction of a switching cost into the objective function can be of interest to enforce hard constraints onto the number of reconfigured paths. To be more specific, let $\x^0$ be a feasible path-wise allocation and assume the actual resource allocation of the demands follows $\x^0$. Now, the traffic demands have changed and the network has to recompute a new path-wise allocation $\x^*$ with fair aggregate bandwidth allocation $\B\x^*$, to respond to the traffic requirements. Assume the network has a budget of  $\kappa >0$ reconfigurations. According to the fairness policy of the network, the allocation should be updated in order to maximize the new fairness metric, without exceeding this budget: 
\begin{equation}||\x^0 - \x^*||_0 \leq \kappa, \Leftrightarrow \sum_p {\1({x^0_p \neq x^*_p})} \leq \kappa,
\label{sparse_constraint}
\end{equation} 
where $||u||_0 = \mathrm{Card}\{p, u_p \neq 0\}$ is called the \emph{zero-norm}\footnote{This is an abuse of terminology as it is not a norm.} of a vector and denoted $\ell_0$. Adding the constraint of Eq.~\eqref{sparse_constraint} into the problem gives rise to a problem structure with integral constraints, and goes beyond the scope of this work. We consider here a relaxation of this problem that is still tractable with our method.  

We can control the zero-norm~\eqref{sparse_constraint} by adding the most natural sparsity inducing penalty induced by the $\theta$-scaled $\ell_1$-norm $\theta ||\x^0 - \x^*||_1$, where $\theta$ is a positive parameter. The $\ell_1$-norm is well known to be the fittest convex relaxation of the $\ell_0$-norm, for the simple reason that the $\ell_1$-ball is the convex hull of the set of points $\{\v$ s.t. $||\v||_0 \leq 1 \}$. We therefore consider the problem described in Eqs.~\eqref{final_objective}, \eqref{f_consensus}, \eqref{final_consensus}, with an extended expression of the function $h^m$:

\begin{align}
h^m(\x^m,\z^m) =  g_{m}^\alpha(\x^m)+\theta \sum_{r\in \setR_m}\sum_{p\in \setP_r} |x^m_p - x^{m0}_p| \nonumber \\ + \sum_{j\in\setJ_m} \iota_j(\z^m), 
\end{align}
where $\x^{m0}$ is the copy of $\x^0$ for domain controller $m$.

With this extended formulation, the changes in FD-ADMM only occur in the optimization stage 1) (Eq.\eqref{stage1}). The term-wise minimization of the functions $h^m$ now also takes into account an incentive for each domain $m$ to stay near the point $\x^{m0}$. Of course, a proper tuning of the parameter $\theta$ is necessary to enforce the real budget $\kappa$. The larger $\theta$, the smaller the number of re-sized paths. We show this effect in the next section, dedicated to the experimentations.

\begin{figure*}[t]
\includegraphics[width = \textwidth]{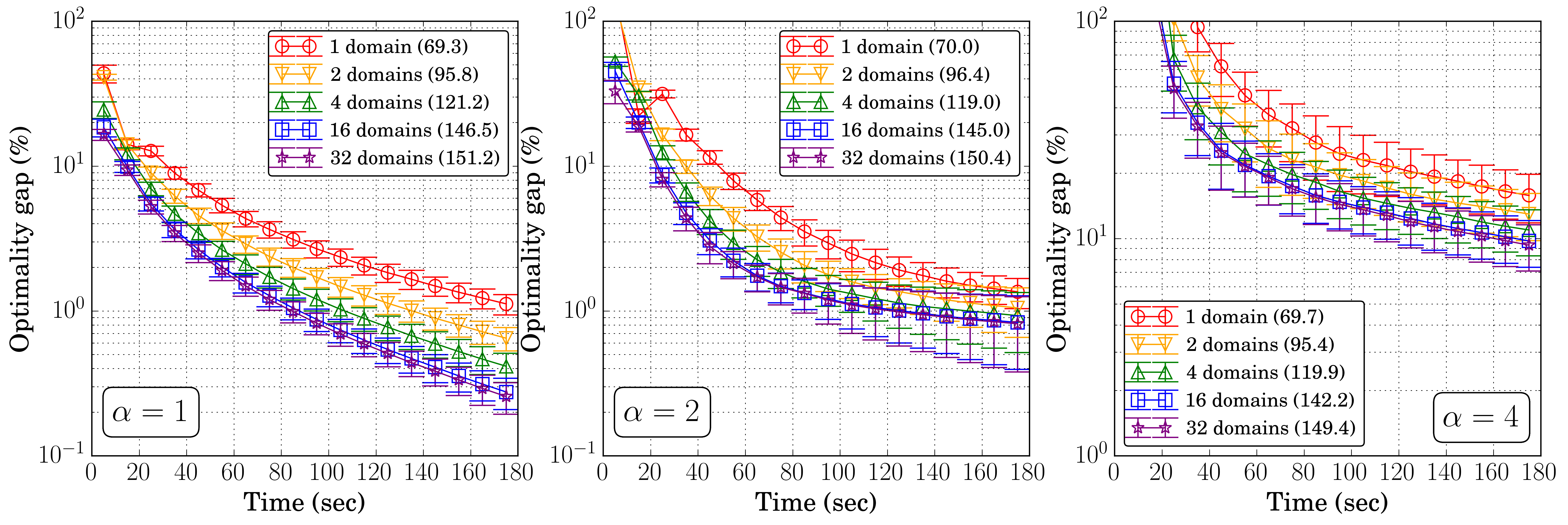}
\caption{{\bf Convergence with domain distribution.} (Barabasi-Albert Graphs) Optimality gap over time within a deadline of 3 minutes for different numbers of domains. The average number of achieved iterations appear in the legend (inside parenthesis). }
\captionsetup{justification=centering}
\label{Figure1}
\end{figure*}
\section{Simulations}
\label{sec:execution}

This section is dedicated to the experimentation of FD-ADMM. First of all, we demonstrate, on large instances with hundreds of nodes and thousands of requests and paths, the gains achievable with the distribution of the workload among several domain controllers (Fig.~\ref{Figure1}). We will not focus on absolute performance evaluation in terms of time, but in demonstrating the benefits of the distribution of the SDN centralized controller into SDN domain controllers that split the workload and build global solutions through consensus. Secondly, we analyze the behavior of FD-ADMM implemented as the extension described in Sec.\ref{sec:reconfiguration} (Fig.~\ref{Figure3}). All the simulations are executed under the three main sharing policies: proportional fairness ($\alpha = 1$), minimum potential delay ($\alpha = 2$) and max-min fairness (with an approximation $\alpha = 4$).
\Add{In previous works \cite{allybokus2017real}, the performance of FD-ADMM with respect to convergence speed, feasibility preservation, and responsiveness in variable traffic requirement situations was extensively compared with the one of the classic Lagrangian method in \cite{kelly1998rate}. We do not display these results here, and refer the curious reader to the aforementioned work.} 
\vspace{-5mm}

\subsection{Setting}
\Add{Two simulations\footnote{The curious reader can find the source code an run the simulations in \cite{sourcecode}. } were run over two types of networks. The one type of network was generated following the model of Barabasi-Albert with minimal degree $4$, and the other was a Fat Tree with a number pods $k = 16$. We give here a description of the generated instances for the first simulation.

\emph{The Barabasi-Albert networks} contained $500$ nodes, which gave problems with $3968$ resources. Over this network, instances of requests were created between sources and destinations chosen uniformly at random with a number of established paths from $2$ to $4$ between the pair of nodes. The instances contained $5,000$ requests (that represented approximately $11,000 - 15,000$ paths in total).

\emph{The Fat Tree (with 16 pods) networks} contained $1,345$ nodes, which gave problems with $3,136$ resources. We modeled a connection to the Internet via the fat tree by adding a root node connected to the core nodes of the tree. For each server, we generated two connections to the root node through $4$ paths each, and to another server chosen uniformly at random through $4$ paths. This gave problems with $3,072$ requests and therefore $12,288$ paths.}

%In all the simulations, the networks were generated following the model of Barabasi-Albert with minimal degree $4$. In the first simulation, the network contained $500$ nodes, which gave problems with $3968$ links. Over this network, instances of requests were created between sources and destinations chosen uniformly at random with a number of established paths from $2$ to $4$ between the pair of nodes. The instances of the first simulation contained $5,000$ requests (that represented approximately $11,000 - 15,000$ paths in total). 
The second simulation was executed on a small network \Add{(generated with the model of Barabasi-Albert with same minimal degree $4$)} with $100$ nodes made of $768$ links, and each instance contained $500$ requests each with $1$ to $2$ established paths (approximately $700 - 800$ paths). 

\Add{In all the simulations,} the network capacities were fixed at an equal value ($100$) and unless specifically mentioned otherwise, the weights $w$ were fixed at a unique value $1$. We created the domains by splitting the set of network links into a number (equal to the desired number of domains) of equally sized subsets, taking into account the network topology so that each domain remains connected. We assumed all the domains performed their update rule in a synchronized manner, meaning that an iteration was achieved (and the variable $\tilde{\z}$ in stage 2) updated) \emph{when all the controllers were done with their respective work}. The direct extension of distributed ADMM-based algorithms to the asynchronous setting is possible and its convergence is studied and demonstrated for instance in \cite{chang2016asynchronous}. We keep simulations in the aforementioned case for future work.

%\begin{figure}[t!]
%\includegraphics[width = .5\textwidth]{./Figures/Figure2}
%\caption{Mean square gap and iteration count versus partition cardinality for Simulation 2.}
%\captionsetup{justification=centering}
%\label{Figure2}
%\end{figure}

\subsection{Results}

\begin{figure*}[t]
\includegraphics[width = \textwidth]{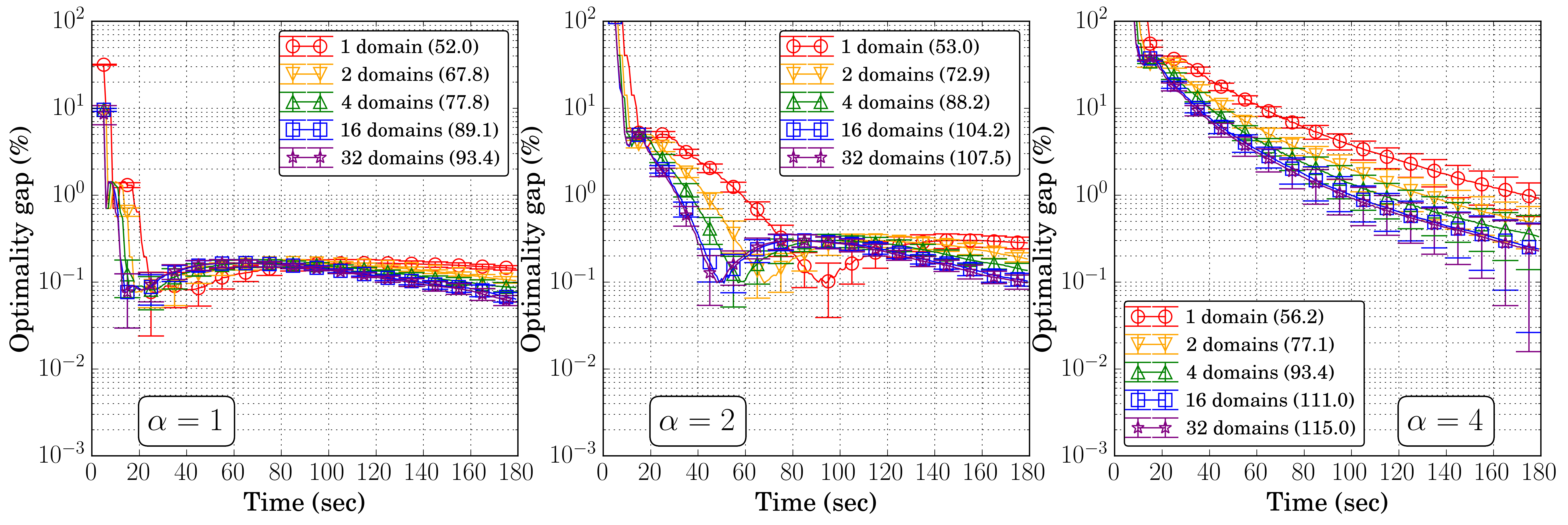}
\caption{\Add{{\bf Convergence with domain distribution.} (Fat Tree Graphs) Optimality gap over time within a deadline of 3 minutes for different numbers of domains. The average number of achieved iterations appear in the legend (inside parenthesis). }}
\captionsetup{justification=centering}
\label{Figure1tree}
\end{figure*}

\noindent{\bf Convergence with domain distribution.} We evaluated the gains achievable by the SDN distribution by plotting, for each partition of the network following Sec.~\ref{sec:consensus} into a number of controllers within $\{1,2,4,16,32\}$, the achieved number of iteration and the optimality gap\footnote{The optimal solution was obtained by running FD-ADMM till the convergence is provably obtained. This can be done by observing the values of the primal and dual residuals of the problem at each iteration. It is shown \cite[Ch. 3.3.1]{boyd2011distributed} that these values bound the optimality gap of the iterate at each iteration. The residual values are thus used as a robust convergence detector. We consider the optimum was reached when the residual values dropped below $10^{-2}$ (modest but satisfactory convergence)} with time under a finite time deadline. In our implementation, each controller performs its private update rules under one specific thread at each iteration. As the update rules for each controller are also massively parallelized, there is still a lot of room for optimizing absolute time performances. Also, the consensus operation stage is here done without exploiting the parallelism. In practice, domain controllers may be themselves equipped with hundreds of cores and perform multi-threads computations that can crush the computation times down to several orders of magnitude smaller. \Add{The deadline was fixed to three minutes for all instances.} The results are shown in Fig.~\ref{Figure1} \Add{(Barabasi-Albert Graphs) and Fig.~\ref{Figure1tree} (Fat Tree Graphs)}. Each point represented corresponds to an average over $15$ and $10$ instances \Add{(for Fig.~\ref{Figure1} and Fig.~\ref{Figure1tree}, respectively)} of the same characteristics along with a $95\%$-confidence interval estimated following the $t$-distribution model. 

The results show that the distribution of the computation permits to operate more than twice faster (according to the average achieved number of iterations) and thus to reach the same optimality gap faster. For instance, for $\alpha = 1$ \Add{in Fig.~\ref{Figure1}}, in order to reach a gap below $2\%$, a single controllers needs more than 2 minutes whereas $32$ controllers together need less than one minute. This reduction of the time can be even more dramatic when the update rules within each domain, as well as the consensus operation stage, are done in parallel. %In this simulation, we do not take into account the communication delay between the domain controllers at each iteration. 

The communication delay being dependent of the SDN implementation, we condense the information concerning the overall communication delay of controllers into the number of performed iterations. According to recent work on this topic \cite{benamrane2017new}, east-west interfaces between domain controllers can permit one to perform inter-controller communications with low latency. With the fast convergence that is demonstrated in Fig.~\ref{Figure1} in terms of iteration count, one can see that the communication delay does not deteriorate the acceleration achieved with the distribution of the workload among the domain controllers. %distribution is still desirable if the delays were counted. An actual implementation of FD-ADMM exploiting the massively parallel structure of the algorithm, and accounting for the communication delays at east-west communication, is kept for the future. 
\vspace{2mm}

\begin{figure*}[h!]
\includegraphics[width = \textwidth]{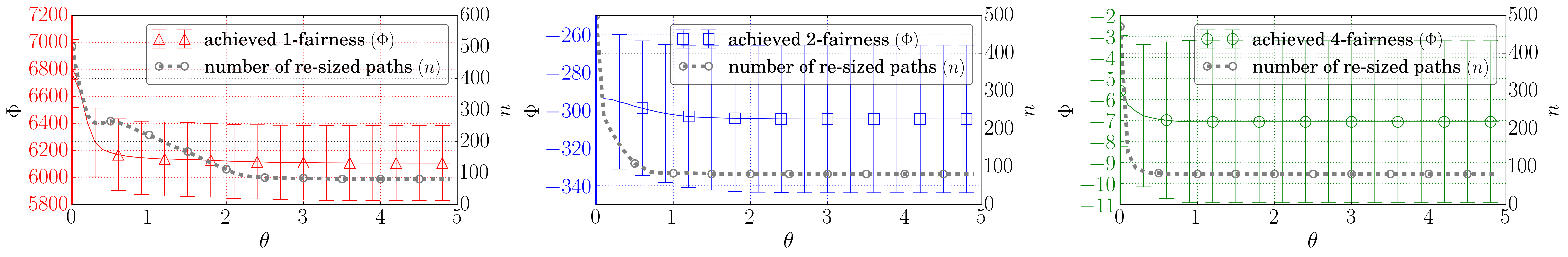}
\centering \caption{{\bf Switching costs.} Number of re-sized paths ($n$) and achieved fairness ($\Phi$) versus the switching cost $\theta$.}
\captionsetup{justification=centering}
\label{Figure3}
\end{figure*}

\noindent{\bf Switching costs.} In the second experiment, we set $x^0$ as an optimal path-wise allocation for the actual setting of the weights $\w (\equiv \1)$. Then, we simulate a new traffic requirement by choosing at random a new weight vector $\w_1$ within $[1,10]$. %$\{\frac{||\w||_2}{10} \leq ||w^1||_2 \leq 2||w||_2\}$
Thus, FD-ADMM runs to find the new optimally fair aggregate bandwidth allocation, but, the value of $\theta$ forces a trade-off between fairness and switching cost. We ran FD-ADMM till convergence was provably obtained to a precision $10^{-2}$ and analyzed the number of paths that had been re-sized, for different values of $\theta$. The minimization problem in stage 1) was solved optimally at each iteration using a standard optimization package for unconstrained problems with a tolerance $10^{-2}$ (as the closed form of \eqref{eq:xupdatefinal} is no longer available for positive $\theta$). To evaluate numerically the zero-norm of vectors, we used the function $N_\epsilon(x)=  \mathrm{Card}(\{p: |x_p| >\epsilon\})$ where $\epsilon$ is a desired level of precision. This precision was fixed to the convergence tolerance of our experimentation, that is, $\epsilon = 10^{-2}$. At the optimum, the objective therefore splits up into a sum $-\Phi +\theta \Psi $, where $\Psi \coloneqq  \sum_{p, |x_p - x^0_p| > \epsilon} |x_p - x^0_p|$, the number of terms in that sum being equal to $n\coloneqq N_\epsilon (\x - \x^0)$ and $\Phi$ is the achieved fairness $\sum_r{f^\alpha_r(y_r)}$ for the optimal aggregate bandwidth allocation $\y = \B\x$. The results are shown in Fig.~\ref{Figure3}. Likewise, each point represented corresponds to an average over $15$ instances of the same characteristics along with a $95\%$-confidence interval estimated following the $t$-distribution model. 

As expected, the larger the configuration cost $\theta$, the smaller the number of reconfiguration. The results show that for a desired number of reconfiguration, it is possible to chose an appropriate value of $\theta$ to enforce it. %Moreover, regarding the iteration counts, a partial ($\theta >0$) reconfiguration (that is optimal for the fairness-reconfiguration tradeoff) of the system can be achieved very efficiently compared to a total ($\theta = 0$) reconfiguration. 
Also, it can be seen that the reconfiguration cost does not deteriorate dramatically the system's sharing policy performance. This means it is possible to reconfigure small subsets of paths on-the-fly (to the limit of what is feasible in terms of reconfiguration budget), and still enforce a satisfactorily fair policy, that will be ultimately optimally fair if the traffic requirement stabilizes. In the figures, we only showed the points for small values of $\theta$ -- this, along with the small precision level $\epsilon$, explain the apparition of a plateau below which it seems impossible to go. In reality, even larger values of $\theta$ permit one to accomplish a very low (down to zero) number of reconfigurations. %Although augmenting the precision $\epsilon$ can permit to ignore smaller path reconfigurations, the enforcement of a new point given by this procedure can fail to respect feasibility. This is why, for this simulation, rigorous convergence needed to be achieved, or, under a specific short deadline, we believe the use of a heuristic to select the reconfigurable paths is needed to polish the solution and respect feasibility. This aspect is out of the scope of the present paper and is an interesting direction for the future. 
We do not plot the points for larger values of $\theta$ in order to focus on the decrease of the reconfiguration number in the very beginning.

\section{Conclusion}

We designed an algorithm, FD-ADMM, that solves optimally the multi-path $\alpha$-fair resource allocation problem. We showed that FD-ADMM is fully distributed and that its implementation is suitable to distributed SDNs, regardless of the actual distribution of the networks into domains. The massively separable sub-problems given by FD-ADMM are efficiently solvable by the SDN domain controllers equipped with massively parallel hardware capable of addressing each update rule simultaneously as multiple-threads, thus reducing considerably the computation time and improving dramatically the responsiveness of the algorithm while providing in real-time feasible solutions. %We presented an extension of FD-ADMM to abide by the practical requirement that only a limited number of paths can be reconfigured at once while the traffic requirement evolves with time. 
It was demonstrated numerically that the distribution of the SDN control permits to accelerate the overall system efficiency by attaining an equivalent optimality gap in highly reduced time. This shows that FD-ADMM scales naturally with the problems size, exploiting the computational power of modern SDN controllers. We also showed that the FD-ADMM extends easily to account for a switching cost per path, and that a trade-off between fairness and switching cost can be operated to preserve the system stability without deteriorating too much the resource allocation efficiency. In the future, we wish to specify the trade-off by providing theoretical bounds on the cost $\theta$ for this specific problem in order to respect any maximum allowed reconfiguration budget $\kappa$ in the multi-path setting.
\label{sec:conclusion}

\bibliographystyle{apa}

\end{document}